\documentclass[prb,aps,reprint,twocolumn,nofootinbib,showpacs,groupedaddress,floatfix]{revtex4-2}

\usepackage{lmodern}
\usepackage{microtype}

\usepackage{amsmath,amssymb,amsthm,mathtools,bm}
\usepackage{mathrsfs}   % for M in appendix

\usepackage{bbm} % for \mathbbm{1} if you want an indicator / identity
\usepackage{graphicx}
\usepackage{booktabs}
\usepackage{array}

\usepackage[dvipsnames]{xcolor} %
\usepackage{hyperref}
\hypersetup{colorlinks=true,allcolors=YellowOrange,pdfborder={0 0 0}}
\theoremstyle{plain}
\newtheorem{theorem}{Theorem}
\newtheorem{lemma}{Lemma}

\theoremstyle{definition}

\theoremstyle{remark}

\newcommand{\cP}{\mathcal{P}}

\DeclareMathOperator{\supp}{supp}

\renewcommand\vec{\boldsymbol}

\def\r{{\vec{r}}}

\def\k{{\vec{k}}}

\def\R{{\vec{R}}}

\begin{document}

\title{Bootstrapping ground state properties of classical frustrated magnets}
\author{Nisarga Paul}
\email{npaul@caltech.edu}
\affiliation{Department of Physics and Institute for Quantum Information and Matter,
California Institute of Technology, Pasadena, California 91125, USA}
\author{Gil Refael}
\email{refael@caltech.edu}
\affiliation{Department of Physics and Institute for Quantum Information and Matter,
California Institute of Technology, Pasadena, California 91125, USA}
\begin{abstract}
We introduce a method based on semidefinite programming that produces rigorous two-sided bounds on ground state energy densities and correlation functions of translation-invariant classical spin models on infinite lattices. In this method, the challenge of non-convex optimization on an infinite lattice is replaced with a hierarchy of finite-size convex optimizations arising from positivity conditions that any probability distribution over spin configurations must satisfy. This adapts the Lasserre hierarchy in the theory of polynomial optimization to the context of frustrated magnetism, and we prove convergence of this hierarchy in the thermodynamic limit. Our method subsumes the Luttinger--Tisza method and applies to non-quadratic Hamiltonians and non-Bravais lattices, thus addressing limitations of prior analytical methods. We apply the method to various two-dimensional frustrated spin models, where it brackets the energy densities and observables accurately across large parameter ranges with typical run times of seconds per parameter point.  
\end{abstract}

\maketitle
\section{Introduction}

Frustrated magnets, in which competing interactions cannot be simultaneously satisfied, host some of the richest phenomena in condensed matter physics~\cite{lacroix2011introduction,balents2010spin}. Classical frustrated magnetism is relevant both at large spin and as a starting point for understanding frustration effects and order parameters in quantum spin systems. Much of this richness lives in the ground state, which can exhibit extensive degeneracy, order by disorder~\cite{villain1980obdo}, and emergent gauge fields~\cite{moessner1998properties,huse2003coulomb,castelnovo2012spin,rehn2016classical,henley2010coulomb,chalker1992hidden}. Paradigmatic examples include the triangular-lattice Ising antiferromagnet~\cite{wannier1950antiferromagnetism} and spin ice on the pyrochlore lattice~\cite{bramwell2001spin}. 

Determining the ground state spin configuration of a frustrated spin system is a formidable optimization problem, known to be \textsf{NP}-hard for Ising spins~\cite{barahona1982computational,lucas2014ising}. For Heisenberg spins with translation-invariant Hamiltonians, the most widely used analytical approach is the Luttinger-Tisza (LT) method~\cite{luttinger1946theory,lyons1960method}. The LT method, which minimizes the energy over sums-of-spirals ansatzes, has limited validity and various limitations. The main numerical alternative, classical Monte Carlo, can face severe equilibration challenges in frustrated systems where the energy landscape has many closely competing local minima~\cite{hukushima1996exchange,earl2005parallel}.  

Bootstrap methods, or semidefinite relaxation methods, have proven useful in a variety of physics applications across high energy theory~\cite{el2012solving,lin2020bootstraps,han2020bootstrap,lin2025high}, condensed matter~\cite{anderson1951limits,barthel2012sdp,gao2025bootstrapping,scheer2025defect}, and quantum information~\cite{doherty2004complete,navascues2007bounding,navascues2008convergent}. In the condensed matter context, they have their root in Anderson-type lower bounds on ground state energies of antiferromagnets, wherein each term of the Hamiltonian is minimized separately~\cite{anderson1951limits}. Since then, these methods have grown in sophistication and capitalized on advances in convex optimization. The basic idea in these approaches is that operator expectation values in a true ground state density matrix satisfy infinitely many algebraic constraints, and these constraints can be solved self-consistently to determine all physical quantities. Bootstrap methods have found application in the quantum Hall problem, matrix quantum mechanics, quantum chemistry, the Sachdev-Ye-Kitaev model, and Ising models, to name a few~\cite{coleman1963rdm,nakata2001rdm,mazziotti2004rdm,haim2020variational,gao2025bootstrapping,han2020bootstrap,hastings2022optimizing,scheer2025defect}. However, it appears that bootstrap methods have not been fully brought to bear on problems in \textit{classical} many-body physics such as classical frustrated magnetism.

%Similar approaches have been applied in quantum chemistry~\cite{coleman1963rdm,nakata2001rdm,mazziotti2004rdm} and quantum many-body physics~\cite{anderson1951limits,barthel2012sdp,han2020bootstrap,scheer2025defect}.
\begin{figure}
    \centering    \includegraphics[width=\linewidth]{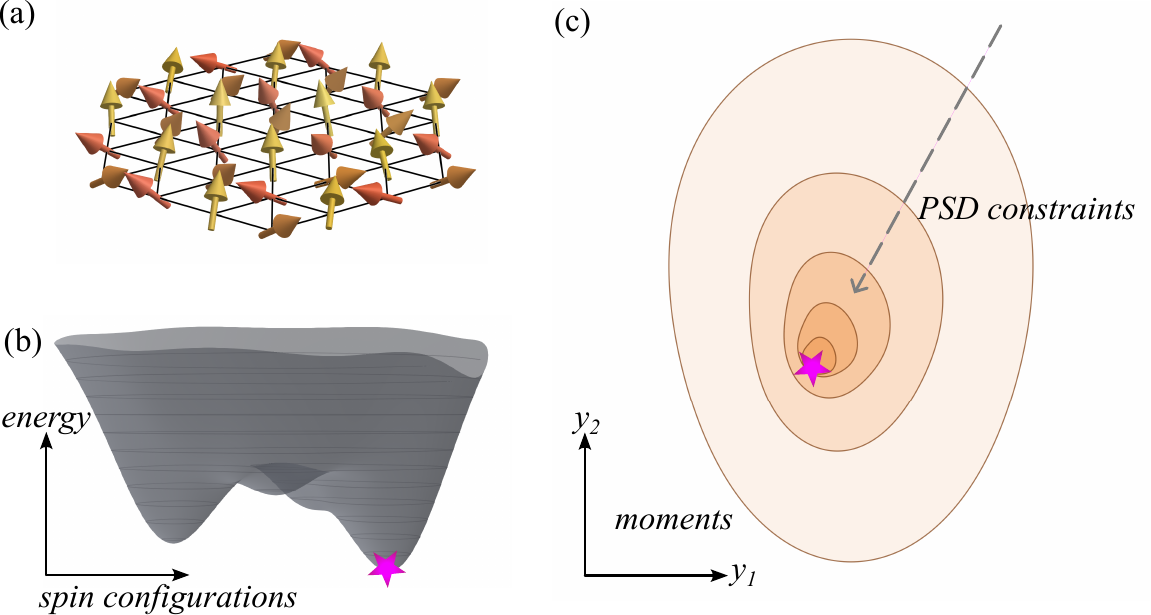}
    \caption{\textit{Bootstrap approach to classical frustrated magnetism.} (a) Schematic of spin texture arising from frustrated interactions. (b) Determining the ground state spin configuration and energy is a non-convex optimization problem with multiple competing local minima. (c) The bootstrap replaces this with a hierarchy of optimization problems over convex domains (shaded regions) in the space of moments of spin variables carved out by positive semidefinite (PSD) constraints. The regions shrink monotonically toward the true solution (star) as more PSD constraints are imposed.}
    \label{fig:1}
\end{figure}

In this work, we introduce a bootstrap method to solve ground state problems in classical frustrated magnetism (Fig.~\ref{fig:1}). One approach could be to take recent advances in quantum many-body bootstrap and replace the quantum Hamiltonian with a trivially commuting one. However, this is not the approach taken here. Instead, our starting point is the observation that finding the ground state of a classical frustrated magnet reduces to a polynomial optimization problem with polynomial constraints enforcing unit-norm spins, which for finitely many variables can be efficiently solved by a bootstrap method known as Lasserre's hierarchy~\cite{lasserre2001global} (or the moment / sum-of-squares hierarchy~\cite{lasserre2018moment}). We review Lasserre's hierarchy in Appendix \ref{app:lasserre} for the interested reader but keep our treatment in the main text self-contained. In this work, we generalize Lasserre's hierarchy to infinitely many variables and adapt it to the case of translation-invariant classical spin Hamiltonians.  

\par 

The result is a bootstrap method which produces rigorous two-sided bounds on the ground state energy density and arbitrary ground state spin correlation functions directly in the thermodynamic limit. This distinguishes it from variational or Monte Carlo methods, which may be finite-size and do not come with guarantees on proximity to the true ground state. Our approach also contains the Luttinger-Tisza method as a subcase, but extends it to non-quadratic Hamiltonians and non-Bravais lattices. Moreover, it can be efficiently implemented using a numerical semidefinite program (SDP) solver such as \texttt{MOSEK}~\cite{mosek}.\par 

The core idea is to replace the challenging non-convex optimization problem over spin configurations by a hierarchy of convex optimizations over translation-invariant correlation functions, subject to positivity and normalization constraints that any valid correlation functions must satisfy. This hierarchy is labeled by a real-space patch $\mathcal{P}$ and a positive integer $r$. As $\mathcal{P}$ and $r$ grow, the solutions converge monotonically to the true, thermodynamic-limit solution. The proof of this convergence combines tools from measure theory and real algebraic geometry, and we present it in Appendix~\ref{app:convergence}.

We demonstrate the method on two classical frustrated spin models in two dimensions. The first is a frustrated Heisenberg model on a lattice with non-uniform coordination, where the LT method fails. The second is a highly frustrated quartic model on the triangular lattice (also known as the bilinear-biquadratic model). The SDP solves the models with energy brackets between $10^{-5}$ and $10^{-8}$ per site (in units of coupling) across broad parameter regions, while additionally providing rigorous guarantees directly in the thermodynamic limit. Beyond bounds on energies and correlation functions, the bootstrap produces explicit spin textures extracted from the SDP solution, which match the variational ground states well. Moreover, solving each SDP takes only a few seconds on a single CPU. We thus propose that our bootstrap approach can serve as a practical general-purpose tool for ground state problems in classical frustrated magnetism complementary to existing heuristic approaches and Monte Carlo.  

\par

\section{Method}
\label{sec:method}

We develop the method in two stages. We first treat the case of quadratic spin models in Sec.~\ref{sec:quadratic}. This case contains the Luttinger--Tisza method as a special case (Sec.~\ref{sec:LT}) and demonstrates the main ideas of the SDP approach. We then treat the general case in Sec.~\ref{sec:spin_sdp}, and finally discuss how to obtain bounds on correlation functions, obtain upper bounds and extract explicit spin textures in Sec.~\ref{sec:observables}.

\subsection{Quadratic Heisenberg Hamiltonians}
\label{sec:quadratic}

For simplicity, we start by considering quadratic Heisenberg Hamiltonians. We consider a periodic lattice in $d$ dimensions with $n_s$ sites per unit cell. Each site is labeled by $(\R, \mu)$, where $\R$ is a Bravais lattice vector and $\mu \in \{1, \ldots, n_s\}$ is a sublattice index.
At each site we place a classical spin $\vec{S}_{\R,\mu} = (S_{\R,\mu}^x, S_{\R,\mu}^y, S_{\R,\mu}^z)$ with unit norm $|\vec{S}_{\R,\mu}|^2 = 1$. All of our results can be straightforwardly generalized to account for Ising spins, for instance by enforcing $S^x = S^y=0$, or indeed any classical degree of freedom living on a compact space. We assume a Hamiltonian of the form
\begin{equation}
    H = \sum_{\R\R'}\sum_{\mu\nu}
    J_{\mu\nu}(\R'-\R)\,\vec S_{\R,\mu}\cdot \vec S_{\R',\nu}, 
    \label{eq:hamiltonian}
\end{equation}
In the thermodynamic limit, we may consider instead the energy density, i.e. energy per unit cell. Taking the reference site $\R=\vec 0$, we define an energy density
\begin{equation}
    \mathcal{E} =  \frac{1}{n_s} \sum_{\mu, \nu,\R} J_{\mu\nu}(\R)\, \vec S_{\vec 0,\mu}\cdot \vec S_{\R,\nu}.
    \label{eq:bilinear_Phi}
\end{equation}
Next, we observe that the expectation value of $\mathcal{E}$ depends only on the two-point correlation function. In particular, 
\begin{equation}
    \varepsilon \coloneqq \langle \mathcal{E} \rangle = \frac{1}{n_s} \sum_{\mu,\nu,\R} J_{\mu\nu}(\R)\, C_{\mu\nu}(\R)
    \label{eq:energy_multisub}
\end{equation}
where $C_{\mu\nu}(\R) \coloneqq \langle \vec S_{\vec 0,\mu}\cdot \vec S_{\R,\nu}\rangle$. Here, the expectation value denotes integration against a distribution over spin configurations; in the case of a single spin configuration, it is simply the evaluation of the expression in that spin configuration. Instead of minimizing the energy over all spin configurations, we can minimize the linear functional Eq.~\eqref{eq:energy_multisub} over the set of valid two-point correlation functions.  
\par 
The challenge now becomes to characterize the set of valid two-point correlation functions, i.e. to characterize which $C_{\mu\nu}(\R)$ can arise from valid spin configurations. Below, we derive two necessary conditions. The first necessary condition is \textit{positivity}: consider a finite patch of lattice sites $\cP$ and an arbitrary real linear combination $f$ of the spin variables in this patch. We must have $\langle f^2 \rangle \geq 0$ for any spin configuration. Expanding the square and using translation and $O(3)$-invariance, we find that the matrix defined by
\begin{equation}
\mathcal{G}_{ij} = C_{\mu\nu}(\R' - \R), \quad i = (\R,\mu),\;\; j = (\R',\nu)
\end{equation}
is positive semidefinite, i.e. $\mathcal{G}\succeq 0$. The second necessary condition is \textit{normalization}: the unit-length constraint $|\vec{S}_{\R,\mu}|^2 = 1$ imposes 
 \begin{equation}
C_{\mu\mu}(\vec 0) = 1 
\label{eq:per_sublattice_norm}
\end{equation}
for each $\mu = 1,\ldots, n_s$.  
\par 
Any spin configuration gives a correlation function satisfying both $\mathcal{G} \succeq 0$ and Eq.~\eqref{eq:per_sublattice_norm}. Therefore, minimizing the energy density Eq.~\eqref{eq:energy_multisub} subject to these two constraints gives a lower bound $\varepsilon_{\cP,1} \leq \varepsilon^*$ on the true ground state energy density $\varepsilon^*$. The resulting optimization problem is a semidefinite program (SDP), a class of convex optimizations that can be solved efficiently~\cite{lasserre2009moments}. The subscript $1$ on $\varepsilon_{\cP,1}$ indicates the $r=1$ level of a hierarchy whose higher levels we construct in Sec.~\ref{sec:spin_sdp}. Enlarging the patch $\cP$ adds more constraints and hence monotonically improves $\varepsilon_{\cP,1}$. However, it is necessary to increase the hierarchy level in order to guarantee convergence in general. 
 
\par

\subsection{Relationship to Luttinger-Tisza method}
\label{sec:LT}

The Luttinger--Tisza (LT) method~\cite{luttinger1946theory,lyons1960method} is an analytical approach for finding ground states of quadratic Heisenberg Hamiltonians by Fourier diagonalization by relaxing the spin normalization constraint. As a brief review, consider spins on a Bravais lattice with energy
\begin{equation}
    E[\{\vec S_i\}] = \sum_{ij} J(\vec r_i - \vec r_j)\, \vec S_i \cdot \vec S_j
\end{equation}
subject to the per-site constraints $|\vec S_i|^2 = 1$. LT replaces these with the single global constraint $\sum_i |\vec S_i|^2 = N$, in which case the energy can be written in Fourier space as
\begin{equation}
    E = \sum_{\k} J(\k)\, |\vec S(\k)|^2,
\end{equation}
subject to $\sum_{\vec k} |\vec S(\k)|^2=N$ and a minimum is obtained by placing all weight at the wavevector $\k^* = \arg\min_\k J(\k)$, giving the energy density $\varepsilon_{\mathrm{LT}} = \min_\k J(\k)$.  
\par 
We now show that the Luttinger--Tisza method coincides with the SDP of Sec.~\ref{sec:quadratic} for Bravais lattices. Consider this SDP with $\mathcal G \succeq 0$ enforced on the full infinite lattice rather than a finite patch. On a Bravais lattice, $\mathcal G_{\R,\R'} = g(\R' - \R)$ depends only on the displacement, and we define its Fourier transform $\hat g(\k) \coloneqq \sum_\R g(\R)\, e^{-i\k\cdot\R}$. By Bochner's theorem~\cite{rudin1990fourier}, the positive-semidefiniteness condition $\mathcal G \succeq 0$ is equivalent to $\hat g(\k) \geq 0$ for all $\k$ in the Brillouin zone. The normalization $g(\vec 0) = 1$ becomes $\int_{\mathrm{BZ}} \hat g(\k)\, d^d k / (2\pi)^d = 1$. The SDP thus reduces to minimizing $\int_{\mathrm{BZ}} J(\k)\, \hat g(\k)\, d^d k / (2\pi)^d$ over probability densities on the Brillouin zone, whose minimum places all weight at $\k^*$ and gives $\varepsilon_{\infty,1} = \min_\k J(\k)$. For Bravais lattices, therefore,
\begin{equation}
    \varepsilon_{\mathrm{LT}} = \varepsilon_{\infty,1} \leq \varepsilon^*.
    \label{eq:LT_vs_SDP}
\end{equation}
On non-Bravais lattices, LT can still be applied but gives strictly weaker bounds, as we demonstrate numerically in Sec.~\ref{sec:centered_sq}.

\subsection{The bootstrap for general Hamiltonians}
\label{sec:spin_sdp}
\begin{figure}[t]
    \centering
    \includegraphics[width=\columnwidth]{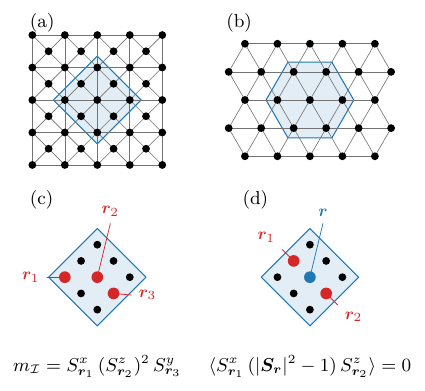}
    \caption{\textit{Lattices, patches, and bootstrap variables.}
  (a),(b) Centered-square and triangular lattices, with shaded regions indicating
  schematic patches $\mathcal{P}$ used to define the semidefinite program (SDP).
  (c) An example spin monomial
  $m_{\mathcal{I}} = S^{x}_{\vec r_1}\,(S^{z}_{\vec r_2})^{2}\,S^{y}_{\vec r_3}$
  supported on three sites of the patch;
  its translation-invariant moment
 $y_{[\mathcal{I}]}=\langle m_{\mathcal{I}}\rangle$ is a variable of the SDP.
  (d) The constraint $|\vec S_{\vec r}|^{2}-1 = 0$
  at $\vec r$ generates constraints among moments
  Multiplying it by any monomial gives a polynomial whose expectation vanishes,
  e.g.\ $\langle S^{x}_{\vec r_1}(|\vec S_{\vec r}|^{2}-1)\,S^{z}_{\vec r_2}\rangle = 0$.}
    \label{fig:lattices}
\end{figure}

The bootstrap approach presented in Sec.~\ref{sec:quadratic} is the first level ($r = 1$) of a hierarchy that handles arbitrary polynomial Hamiltonians. We now formulate the general construction.  

A translation-invariant (TI) Hamiltonian is specified by its per-site energy density $\mathcal{E}$, a polynomial in the spin variables at finitely many sites near the origin. The ground state energy density is
\begin{equation}
    \varepsilon^* = \inf_{\mu \in M_{\mathrm{TI}}} \langle \mathcal{E}\rangle_\mu,
    \label{eq:gs_energy}
\end{equation}
where $M_{\mathrm{TI}}$ denotes the set of translation-invariant (TI) states. A translation-invariant state is, formally, a translation-invariant probability distribution over spin configurations in the infinite lattice; we formulate this precisely in Appendix~\ref{app:convergence}. From now on, any expectation value $\langle \cdot \rangle$ will denote an expectation value in a translation-invariant state. This restriction does not sacrifice generality, since any spin configuration has the same energy density as its translation average. For correlation functions, the TI average may differ from values in a specific symmetry-broken ground state, and the bounds we derive apply to the symmetry-averaged correlation functions.

We work in terms of spin monomials. A \textit{spin monomial} is a product $m_\mathcal{I} = S^{a_1}_{\r_1} \cdots S^{a_k}_{\r_k}$, where each $\r_j$ specifies a site and sublattice and each $a_j\in \{x,y,z\}$. The number $k = |\mathcal{I}|$ is the \textit{degree} of the monomial, and by convention $m_\emptyset = 1$. Translation invariance implies that $\langle m_\mathcal{I}\rangle$ depends only on the relative positions of the sites; we write $[\mathcal{I}]$ for the equivalence class of $\mathcal{I}$ under uniform Bravais translations and define \begin{equation}
    y_{[\mathcal{I}]} \coloneqq \langle m_{[\mathcal{I}]}\rangle
\end{equation} to be the associated expectation value. Recall that, here, the expectation value is integration against a distribution over the space of spin configurations (see Appendix \ref{app:convergence} for a precise definition). 
\par
The $r$-th level of the SDP hierarchy takes as variables translation-invariant expectation values $y_{[\mathcal{I}]}$ of all monomials $\mathcal{I}$ of degree at most $r$. In practice we fix a finite patch $\mathcal{P}$ and retain only those monomials whose sites all lie in $\mathcal{P}$. This gives a finite index set
\begin{equation}
    B_{\mathcal{P}, r} \coloneqq \{\mathcal{I} : \r_1,\ldots,\r_k \in \mathcal{P},\ |\mathcal{I}| \leq r\}
\end{equation}
and a finite set of variables $y_{[\mathcal{I}]}$.

Next, we define the \textit{moment matrix} $L$, indexed by pairs of monomials in $B_{\mathcal{P}, r}$, with entries
\begin{equation}
    L_{\mathcal{I}, \mathcal{I}'} = \langle m_{[\mathcal{I}]}\, m_{[\mathcal{I}']}\rangle,
    \label{eq:moment_matrix_general}
\end{equation}
which is expressible in terms of the variables $y_{[\mathcal{J}]}$ since $m_\mathcal{I} m_{\mathcal{I}'}$ is itself a monomial. Positive semidefiniteness of $L$ follows from the observation that for any real coefficients $\{c_\mathcal{I}\}$, $\sum c_\mathcal{I} c_{\mathcal{I}'} L_{\mathcal{I}, \mathcal{I}'} = \langle (\sum c_\mathcal{I} m_\mathcal{I})^2\rangle \geq 0$.

We impose the unit-length constraints $\sum_a (S^a_\r)^2 = 1$ as follows. For each site $\r \in \cP$ we define the \textit{localizing matrix}
\begin{equation}
[C_{\vec r}]_{\mathcal{I},\mathcal{I}'} = \left \langle m_{\mathcal{I}}\left(
    \sum_{a} (S_\r^a)^2 -1\right)\, m_{\mathcal I'}\right\rangle 
    \label{eq:localizing}
\end{equation}
with $\mathcal I, \mathcal I' \in B_{\cP, r-1}$, and we impose $C_{\vec r} = 0$. This enforces linear relations among the variables $y_{[\mathcal J]}$. At $r = 1$ (i.e. for at most quadratic monomials), this reduces to the normalization Eq.~\eqref{eq:per_sublattice_norm} of Sec.~\ref{sec:quadratic}.  
\par

Altogether, the SDP at level $r$ is
\begin{equation}
\begin{aligned}
    \varepsilon_{\cP, r} \;=\; \min_{y}\;\; &  \langle \mathcal{E}\rangle \\
\text{s.t.}\;\; & L \succeq 0,\ C_\r = 0\ \forall\, \r \in \cP.
\end{aligned}
\label{eq:tl_sdp}
\end{equation}
As an additional constraint, we require $L_{\emptyset \emptyset} = 1$, which states that the expectation value of the empty monomial is $1$. The minimization involves the variables $\{y_{[\mathcal I]} : \mathcal I \in B_{\cP, 2r}\}$. Enlarging the patch or raising the hierarchy level adds constraints and produces monotonically improving lower bounds:
\begin{equation}
    \varepsilon_{\cP, r} \leq \varepsilon_{\cP', r'} \leq \varepsilon^* \quad \text{for } \cP \subseteq \cP',\ r \leq r'.
    \label{eq:monotone}
\end{equation}
 
\par
At this stage, we have constructed a hierarchy of finite-sized semidefinite programs, which can be efficiently and exactly solved by SDP solvers, whose solutions are lower bounds on the true ground state energy density $\varepsilon^*$. A natural and important question is whether these lower bounds converge to $\varepsilon^*$ as the patch size and hierarchy level grow. Due to the fact that we are dealing with an infinite lattice and infinite series of optimizations, answering this question requires some care. Thankfully, we can use standard mathematical tools in the theory of the moment problem and measure theory to answer this in the affirmative: in Appendix ~\ref{app:convergence}, we provide a full proof that for a sequence $(\cP_n, r_n) \to (\mathbb Z^d, \infty)$, we have 
\begin{equation}
    \lim_{n\to \infty} \varepsilon_{\cP_n,r_n} = \varepsilon^*.
\end{equation}
Here, we explain the main idea of the proof for the interested reader. We make essential use of the fact that all moments $y_{[\mathcal{I}]}$ output by the SDP programs live on a compact space, namely $[-1,1]^{\mathcal{C}}$ where $\mathcal{C} = \{[\mathcal{I}]\}$ is the set of monomial equivalence classes. Owing to compactness, the sequence of moments output by the SDP has a subsequence converging to a limit point $y^*_{[\mathcal{I}]}$. The crux is to show that this limiting set of moments arises from a genuine translation-invariant probability distribution over spin configurations on the infinite lattice, for which we use Kolmogorov's extension theorem. The energy density of this probability distribution equals the limit of the $\varepsilon_{\cP_n,r_n}$ and upper bounds $\varepsilon^*$, forcing it to equal $\varepsilon^*$.

\subsection{Bounds on observables, upper bounds on energy, and SDP spin textures}
\label{sec:observables}

So far, we have focused on bootstrapping lower bounds on the energy density. In this section, we extend this in two directions: to bounds on observables and upper bounds on the energy density. We also explain how one may obtain an approximate ground state spin texture directly from the solution of the SDP. 
\par

First, a generalization of the bootstrap method yields rigorous two-sided bounds on arbitrary polynomial observables with finite support. Let $\mathcal{O}$ be any observable written as a polynomial in the spin variables with support within $\cP$ and of degree at most $2r$. Then $\langle \mathcal{O}\rangle$ is a linear function of the moments $y_{[\mathcal I]}$, and all of its constituent monomials appear among the variables of the SDP. Minimizing or maximizing this linear function over the SDP feasible set of Eq.~\eqref{eq:tl_sdp} therefore yields rigorous lower and upper bounds on the ground state expectation value of $\mathcal{O}$. As a concrete example, we will be interested in two-point observables $\mathcal{O} = \vec S_{\vec 0, \mu}\cdot \vec S_{\R, \nu}$.
 \par
We should emphasize that these bounds apply to translation-invariant states, so in the case of translation symmetry breaking, we constrain the symmetry-averaged correlation function rather than the correlation function in a specific symmetry-broken ground state. 
 \par
In practice, we often obtain tighter bounds by combining the observable SDP with an energy constraint. Let $\varepsilon_\mathrm{ub}$ be an upper bound on the ground state energy density obtained from any variational procedure. We then add the constraint $\langle \mathcal{E}\rangle \leq \varepsilon_\mathrm{ub}$ to the SDP, giving
\begin{equation}
\begin{aligned}
    \mathcal{O}_{\min/\max} \;=\; \min/\max\;\; & \langle \mathcal{O}\rangle \\
    \text{s.t.}\;\; & \text{Eq.~\eqref{eq:tl_sdp}},\quad \langle \mathcal{E}\rangle \leq \varepsilon_\mathrm{ub}.
\end{aligned}
\label{eq:observable_sdp}
\end{equation}
These constraints restrict the feasible set to physical moment sequences with energy density compatible with an (optional) upper bound. These observable bounds are rigorous in the thermodynamic limit, and hence can be used in principle to rule out candidate ground state orders or confirm numerical results from complementary methods.  

Next, we discuss how upper bounds on the energy density may be obtained. First, we note that there are standard variational techniques such as classical Monte Carlo which are broadly applicable but can struggle in frustrated systems. In this work, for later results, we consider iterative one-site minimization on a large periodic cluster. Starting from a random initial spin configuration, we update each spin $\vec S_i$ to the unit vector that minimizes the energy with all other spins held fixed, and sweep over the cluster until convergence. For quadratic Hamiltonians, this update aligns $\vec S_i$ antiparallel to its local field $\vec h_i = \sum_j J_{ij}\, \vec S_j$. For non-quadratic Hamiltonians, the update instead minimizes the local energy numerically over the unit sphere. We repeat the procedure from many random initial configurations and retain the lowest energy found, which we denote $\varepsilon_\mathrm{ub}$. Because any normalized, translation-invariant spin configuration gives an energy density that is at least $\varepsilon^*$, we have $\varepsilon_\mathrm{ub} \geq \varepsilon^*$. This procedure works on any lattice and for any polynomial Hamiltonian, but its success depends on the energy landscape being navigable by single-site sweeps from random starting points. In strongly frustrated systems, where there are many competing minima, this procedure can fail to obtain the true ground state energy.  

A second procedure for obtaining an upper bound extracts a candidate spin configuration directly from the SDP solution. To explain this method, we first explain how a spin texture can be extracted from the (approximate) two-point correlation function $\mathcal{G}_{ij} = \langle \vec S_i \cdot \vec S_j\rangle$ provided by the SDP. We observe that for any true spin configuration $\{\vec S_i\}$, the matrix $\vec S_i \cdot \vec S_j = S^x_i S^x_j + S^y_i S^y_j + S^z_i S^z_j$ is the sum of three rank-one outer products, and hence has a rank at most three, the number of spin components. Therefore, if we were given only the matrix $\vec S_i \cdot \vec S_j$, we could generically extract a consistent spin configuration $\{\vec S_i\}$ as follows: eigendecompose 
\begin{equation}
\vec S_i\cdot \vec S_j = \lambda_1 v^1_i v^1_j + \lambda_2 v^2_i v^2_j + \lambda_3 v^3_i v^3_j,
\end{equation}
where $\vec v^a \cdot \vec v^b = \delta^{ab}$, and form a trial texture $\vec S_i = \sqrt{\lambda_1} v_i^1 \hat{\vec x} + \sqrt{\lambda_2} v_i^2 \hat{\vec y} + \sqrt{\lambda_3}v_i^3\hat{\vec z}$. This spin configuration has the correct two-point correlations and norm as long as $\vec S_i\cdot \vec S_i = 1$ for all $i$.\par

To extract an approximately consistent spin configuration from the matrix $\mathcal{G}$ output by the SDP, we can proceed similarly. In particular, we eigendecompose 
\begin{equation}
\mathcal{G}_{ij} = \sum_n \lambda_n\,  v_i^n v_j^n,
\end{equation}
with eigenvalues ordered $\lambda_1 \geq \lambda_2 \geq \cdots \geq 0$, retain the three leading eigenvalues, and set
\begin{equation}
 \tilde{\vec S}_i = \sum_{n=1}^3 \sqrt{\lambda_n}\, v^n_i\, \hat{\vec e}_n,
    \label{eq:rank3_texture}
\end{equation}
where $\hat{\vec e}_{1,2,3} = \hat{\vec x},\hat{\vec y},\hat{\vec z}$. We then normalize $\hat{\vec S}_i = \tilde{\vec S}_i / |\tilde{\vec S}_i|$. This approximation is not guaranteed to reproduce $\mathcal{G}_{ij}$. However, it is a valid variational spin texture and often reproduces $\mathcal{G}_{ij}$ to good accuracy (an important exception being when there is a large ground state degeneracy).

Together, the SDP lower bound and the variational upper bound from either procedure give the two-sided bracket
\begin{equation}
    \varepsilon_{\cP, r} \leq \varepsilon^* \leq \varepsilon_\mathrm{ub}
    \label{eq:two_sided}
\end{equation}
on the ground state energy density. This bracket holds directly in the thermodynamic limit without finite-size extrapolation, and its width quantifies the precision of the method on a given problem. In the examples considered here, we find bracket widths ranging from $10^{-5}$ to $10^{-8}$ per site in units of the coupling.   

\section{Results}
\label{sec:results}

We now numerically demonstrate our bootstrap method on two classical frustrated spin models in two dimensions, whose lattices are shown in Fig.~\ref{fig:lattices}. The first is a quadratic Heisenberg model on a centered-square lattice, chosen as an example of a frustrated model with inequivalent sites and non-uniform coordination. The second model is a quartic Heisenberg model on a triangular lattice, also known as the bilinear--biquadratic model, which is highly frustrated and known to possess an extensive ground state degeneracy at intermediate couplings. To obtain these numerical results, the SDPs were formulated in JuMP~\cite{Lubin2023} and solved with \texttt{MOSEK}~\cite{mosek}.

\subsection{Centered-square lattice}
\label{sec:centered_sq}

The centered-square lattice is shown in Fig.~\ref{fig:lattices}a. This lattice has inequivalent sites with non-uniform coordinations: corner sites have coordination $z = 8$, while center sites have $z = 4$. This is a simple non-Bravais lattice where Luttinger--Tisza faces challenges, as we show.   
\par 
In particular, the Hamiltonian is assumed to include only nearest-neighbor antiferromagnetic Heisenberg interactions,
\begin{equation}
    H = J_1 \sum_{\langle ij\rangle_{\mathsf{+}}} \vec S_i \cdot \vec S_j + J_2 \sum_{\langle ij\rangle_{\mathsf{\times}}} \vec S_i \cdot \vec S_j,    \label{eq:centered_square_H}
\end{equation}
where $\langle ij\rangle_{\mathsf{+}}$ denotes horizontal and vertical nearest-neighbor bonds and $\langle ij\rangle_{\mathsf{\times}}$ denotes diagonal nearest-neighbor bonds. We fix $J_1 = 1$.   
\par 

In Figure~\ref{fig:centered_square}, we show the SDP lower bound, the variational upper bound, and the LT result for varying $J_2/J_1$ for the ground state energy density $\varepsilon$ and nearest-neighbor two-point correlation functions $g_\mathrm{nn} \coloneqq \langle \vec S_0 \cdot \vec S_1\rangle$. The SDP bound and the variational upper bound nearly coincide everywhere. The LT bound is loose for $J_2 / J_1 \lesssim 1.4$, demonstrating the advantage of the SDP method. In Figure~\ref{fig:centered_square}(c) we show that the agreement between the SDP and the variational procedure extends to the spin textures produced by each method, which are visually identical after a global $O(3)$ alignment. Altogether, this demonstrates the advantage of the SDP method over the LT method and complementarity with finite-size variational methods, which the SDP can certify with rigorous, thermodynamic-limit results.

\begin{figure}[t]
    \centering
    \includegraphics[width=\columnwidth]{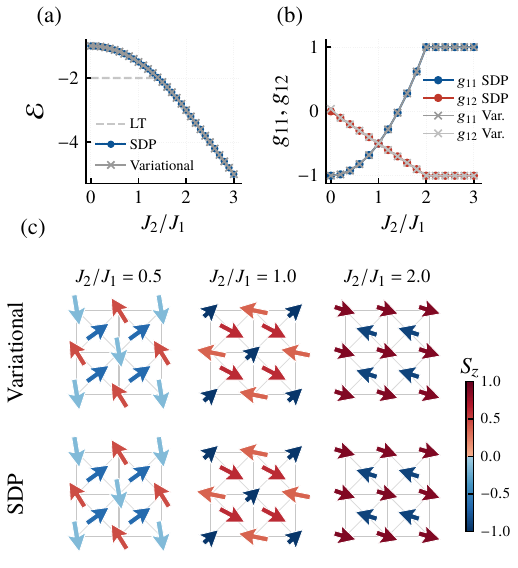}
\caption{\textit{Centered-square lattice model.} Couplings are $J_1$ (horizontal and vertical bonds) and $J_2$ (diagonal bonds). (a) SDP lower bound (blue), variational upper bound (gray crosses), and Luttinger--Tisza (gray dashed) for the ground state energy density $\varepsilon$. (b) Nearest-neighbor correlators $g_{11}$ (corner--corner) and $g_{12}$ (corner--center) from the SDP (blue and red circles) and the variational procedure (gray crosses). (c) Spin textures from the variational procedure (top) and rank-3 rounding of the SDP solution (bottom) at three values of $J_2/J_1$. Textures are aligned by a global $O(3)$ rotation.}
    \label{fig:centered_square}
\end{figure}
\begin{figure}[t]
    \centering
    \includegraphics[width=\columnwidth]{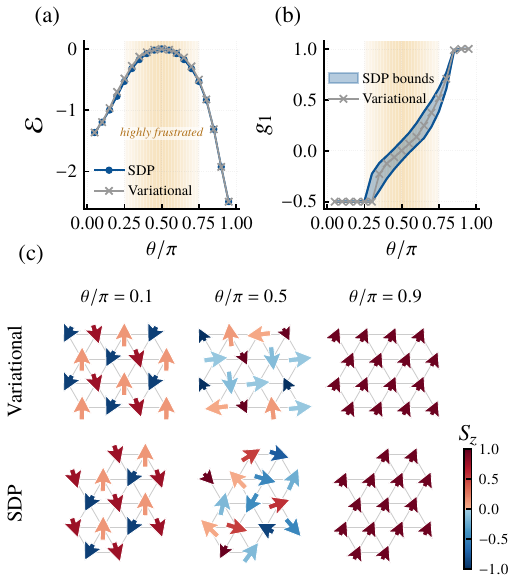}
\caption{\textit{Bilinear--biquadratic model on the triangular lattice.} (a) SDP lower bound (blue) and variational upper bound (gray crosses) on the ground state energy density $\varepsilon$ vs.\ $\theta/\pi$. (b) Two-sided bounds (blue band) on the nearest-neighbor correlator $g_\mathrm{nn}$; variational values (gray crosses) lie within the SDP interval. (c) Spin textures from the variational procedure (top) and rank-3 rounding of the SDP solution (bottom) at $\theta/\pi = 0.1$ ($120^\circ$ antiferromagnetic), $0.5$ (frustrated), and $0.9$ (ferromagnetic). Textures are aligned by a global $O(3)$ rotation.}
\label{fig:blbq}
\end{figure}

\subsection{Quartic triangular antiferromagnet}
\label{sec:bbq}

We now turn to the bilinear-biquadratic (BLBQ) model on the triangular lattice~\cite{lauchli2006quadrupolar,tsunetsugu2006nematic}, a highly frustrated quartic Heisenberg model. Because the Hamiltonian is quartic, Luttinger--Tisza does not apply. The Hamiltonian is
\begin{equation}
    H = \frac{1}{2}\sum_{\langle ij\rangle}
    \bigl[\cos\theta\, \vec{S}_i \cdot \vec{S}_j
    + \sin\theta\, (\vec{S}_i \cdot \vec{S}_j)^2 \bigr],
\label{eq:blbq_hamiltonian}
\end{equation}
where the sum is over nearest-neighbors on the triangular lattice. The model has a well-characterized phase diagram: as $\theta$ sweeps from $0$ to $\pi$, there is a $120^\circ$ antiferromagnet, a highly frustrated window with extensive ground state degeneracy, and a ferromagnet.  
\par
In Figure~\ref{fig:blbq}(a) we show the SDP lower bound and the variational upper bound on $\varepsilon$ across the phase diagram. For the SDP, we use an $r = 2$ moment matrix on a four-site patch jointly with an $r = 1$ two-point moment matrix on a 25-site patch. The $r = 2$ patch is sufficient for tight energy bounds, while the $r = 1$ patch is helpful for extracting the spin texture. The two bounds nearly coincide in the $120^\circ$ and ferromagnetic phases ($\theta \approx 0$ and $\pi$, respectively). Figure~\ref{fig:blbq}(b) shows two-sided bounds on the nearest-neighbor spin correlation function $g_\mathrm{nn} \coloneqq \langle \vec S_0 \cdot \vec S_1\rangle$, obtained using Eq.~\eqref{eq:observable_sdp} and a variational upper bound $\varepsilon \leq \varepsilon_\mathrm{ub}$. In the $120^\circ$ phase ($\theta/\pi \lesssim 0.25$), $g_\mathrm{nn}=-1/2$ and in the ferromagnetic phase ($\theta/\pi \gtrsim 0.85$), $g_\mathrm{nn}$ approaches $1$, as expected. In the intermediate region where the quartic interaction dominates and ground state degeneracy is extensive, the bracket is much wider, reflecting the intrinsic difficulty of this regime. However, the SDP results correctly certify the variational results, which fall within the SDP bounds everywhere. In Figure~\ref{fig:blbq}(c) we compare variational and SDP-rounded spin textures at three representative couplings. In the $120^\circ$ and ferromagnetic phases, the SDP texture matches the variational one after $O(3)$ alignment. In the highly frustrated regime at $\theta/\pi = 0.5$, the two-point moment matrix has rank 25 (the maximum), consistent with a large degenerate ground state manifold. In this case, the rank-3 projection selects an arbitrary representative from this manifold.

\section{Discussion}
\label{sec:discussion}

We have introduced a bootstrap method for classical frustrated magnetism based on classical results in polynomial optimization~\cite{lasserre2001global}, producing rigorous two-sided bounds on ground state energy densities and arbitrary spin correlation functions directly in the thermodynamic limit. The hierarchy is indexed by a patch $\cP$ and a level $r$, converges monotonically as $\cP$ and $r$ grow to the true ground state results, and reduces to the Luttinger--Tisza method at lowest level on Bravais lattices, while extending naturally to non-Bravais lattices and non-quadratic Hamiltonians. Numerical results on two-dimensional frustrated models indicate excellent performance in practice, showing close agreement with variational results while providing previously absent guarantees. Moreover, it is straightforward to extract explicit candidate ground-state spin textures and each SDP solve is computationally efficient. Hence, we believe this method can be a versatile general-purpose tool for classical frustrated magnetism.
\par 
Several extensions would be natural. One example is relaxing the requirement for translation invariant Hamiltonians to allow for quenched disorder, thereby probing ground state energies and correlation functions in classical spin glasses, a longstanding problem. It would also be interesting to consider a semiclassical bootstrap method in between our method and the fully quantum many-body bootstrap, which works with noncommutative operators and the full density matrix, as would be relevant to the leading quantum corrections at large spin. Finally, it would be interesting to investigate whether bootstrap approaches can be useful in studying classical dynamical processes~\cite{fantuzzi2016bounds,doering2020optimal,cho2025bootstrapping}, including Landau–Lifshitz–Gilbert dynamics in magnetism. We leave these directions to future work.

\paragraph*{Acknowledgments.} We thank Roderich Moessner and John Chalker for helpful comments and Yugo Onishi and Mayur Sharma for related discussions. We thank David Simmons-Duffin for pointing out helpful references. NP acknowledges support
from the Walter Burke Institute for Theoretical Physics at Caltech. GR
is grateful for support from the Simons Foundation,
the Institute for Quantum Information and Matter,
an NSF Physics Frontiers Center (PHY-2317110), and
the AFOSR MURI program, under agreement number
FA9550-22-1-0339.

\bibliographystyle{apsrev4-2}
\bibliography{bib}

\appendix
\section{The Lasserre hierarchy}
\label{app:lasserre}

The SDP bootstrap of Sec.~\ref{sec:spin_sdp} is an instance of the Lasserre hierarchy~\cite{lasserre2001global,lasserre2009moments}, a general framework for polynomial optimization via semidefinite relaxations. We review the Lasserre hierarchy briefly here. Suppose we wish to find the global minimum $p^\star$ of a real polynomial $p(x): \mathbb{R}^n \to \mathbb{R}$:
\begin{equation}
    p^\star = \min_{x\in \mathbb{R}^n} p(x).
\end{equation}
Directly minimizing $p(x)$ is generally computationally hard. One approach, unsuitable in general, is to enumerate all local extrema $\nabla p(x) = 0$ and search for the global minimum. Lasserre's approach is to lift the problem from the space of points $x$ to the space of probability distributions $\mathcal{P}(\mathbb{R}^n)$. We rewrite the optimization in an equivalent form~\cite{lasserre2001global}:
\begin{equation}\label{eq:muproblem}
    p^\star = \min_{\mu \in \mathcal{P}(\mathbb{R}^n)} \int p(x) \, d\mu(x).
\end{equation}
The equivalence holds because the optimal distribution is simply a Dirac delta function $\delta(x-x^\star)$ concentrated at the global minimum, assuming the minimum is finite and attained. One advantage of this formulation is linearity. We may encode the distribution $\mu$ by its sequence of moments $y_\alpha = \int x^\alpha\,d\mu(x)$, where $x^\alpha \coloneqq x_1^{\alpha_1}\cdots x_n^{\alpha_n}$. The highly nonlinear objective function becomes a linear functional of these moments:
\begin{equation}
L_y(p) \coloneqq \sum_\alpha c_\alpha y_\alpha = \int p(x)\,d\mu(x).
\end{equation}
The trade-off is that we must characterize which sequences $y_\alpha$ correspond to valid distributions. For example, no probability distribution can have $y_2 = \int x^2 d\mu(x) <0$. A necessary (and sufficient for $n=1$) condition is \emph{positivity}: for any polynomial $g(x)$, the expectation of its square must be non-negative:
\begin{equation}
    \int g(x)^2 d\mu(x) \ge 0.
\end{equation}
At a finite relaxation level $d$, we enforce this condition for all test polynomials $g(x)$ of degree $\le d$. Let $\mathcal{M}_d(x)$ denote the vector of all monomials $x^\alpha$ with total degree $|\alpha|:=\sum_k \alpha_k\le d$. Writing $g(x) = c^\top \mathcal{M}_d(x)$ in a monomial basis, the condition $L_y(g^2) \ge 0$ becomes the requirement that the \emph{moment matrix} $M_d(y)$ is positive semidefinite:
\begin{equation}
    M_d(y)_{\alpha,\beta} := y_{\alpha+\beta} \succeq 0.
\end{equation}
This yields Lasserre's hierarchy of semidefinite programs~\cite{lasserre2001global}:
\begin{equation}\label{eq:lasserrePrimal}
\bar p_d = \min_y L_y(p) \quad \text{s.t.} \quad y_0=1,\ M_d(y)\succeq 0.
\end{equation}
Any valid distribution satisfies these moment constraints, so $\bar p_d \le p^\star$. As an explicit example, consider the task of minimizing the polynomial $p(x_1,x_2) = x_1^2 x_2^2+x_1^2+4 x_1 x_2 -2 x_1+x_2^2+2 x_2+6$. At degree $d=2$, the optimization variables are the moments $y_{\alpha}$ up to degree 4. The objective function is linear: $y_{2,2}+y_{2,0}+4y_{1,1}-2y_{1,0}+y_{0,2}+2y_{0,1} +6$. The constraint $M_2(y) \succeq 0$ explicitly requires the following $6 \times 6$ matrix to be positive semidefinite:
\begin{equation}
    \begin{pmatrix}
        1 & y_{1,0} & y_{0,1} & y_{2,0} & y_{1,1} & y_{0,2} \\
        y_{1,0} & y_{2,0} & y_{1,1} & y_{3,0} & y_{2,1} & y_{1,2} \\
        y_{0,1} & y_{1,1} & y_{0,2} & y_{2,1} & y_{1,2} & y_{0,3} \\
        y_{2,0} & y_{3,0} & y_{2,1} & y_{4,0} & y_{3,1} & y_{2,2} \\
        y_{1,1} & y_{2,1} & y_{1,2} & y_{3,1} & y_{2,2} & y_{1,3} \\
        y_{0,2} & y_{1,2} & y_{0,3} & y_{2,2} & y_{1,3} & y_{0,4}
    \end{pmatrix}\succeq 0.
\end{equation}
Solving this SDP yields a rigorous lower bound on the global minimum. As $d$ increases, the set of test polynomials grows, tightening the bound. Under a further mild boundedness assumption, the sequence converges to the true minimum, $\lim_{d\to\infty} \bar p_d = p^\star$~\cite{schmudgen2017moment}.

\section{Convergence of the hierarchy}
\label{app:convergence}

The bootstrap approach that we have introduced naturally produces a hierarchy of lower bounds on the true thermodynamic-limit ground state energy density for translation-invariant Hamiltonians. The lower bounds get tighter with computational effort. It is desirable, from both practical and formal perspectives, to prove that this procedure \textit{converges}: that is, to show that as the spatial patch $\cP$ grows and the hierarchy level $r$ increases, the SDP lower bounds $\varepsilon_{\cP,r}$ converge to the \textit{true} ground state energy density $\varepsilon^*$.\par 

In this Appendix, we prove this convergence. Part of the difficulty of this proof is working with translation-invariant probability measures in an infinite-dimensional configuration space, so some effort is necessary to precisely define this notion. The finite-dimensional SDPs used in practice are finite projections of the full constraint set to monomials supported in $\cP$ and degree at most $r$; we'll define an object $\mathcal{F}_{\cP,r}$ to describe this. After this is defined, the proof proceeds in three main steps: (1) showing that any sequence of bootstrap optimizers has a convergent subsequence, (2) a limit point of such a subsequence defines a genuine translation-invariant probability measure on the infinite spin configuration space, and (3) the energy density of this measure equals the true ground state energy density. Our proof is aimed to be readable by a physics audience familiar with elementary topology and measure theory. \par 

We will be interested in the spin configuration space
\begin{equation}
    X \coloneqq (S^2)^\Gamma,
    \qquad
    \Gamma \coloneqq \mathbb{Z}^d \times \{1,\dots,n_s\}
\end{equation}
where $S^2$ is the unit $2$-sphere, $n_s$ is the number of sublattices, and $\Gamma$ is the lattice. We equip $X$ with the product topology, where each copy of $S^2$ carries its usual topology as a subspace of $\mathbb{R}^3$. This means that a sequence $s^{(n)} \in X$ converges to $s \in X$ if and only if, for each fixed site $i \in \Gamma$, $s^{(n)}_i \to s_i$ in $S^2$. Moreover, since $S^2$ is compact and $\Gamma$ is countable, $X$ is compact by Tychonoff's theorem.  

We will need to define probability measures on $X$. For this, the structure of a topology (which defines the collection of open sets) is not sufficient, and we need the structure of a $\sigma$-algebra: a collection of subsets closed under complement, countable unions, and countable intersections. The canonical choice is to take the \textit{Borel} $\sigma$-algebra $\mathcal{B}(X)$, defined as the smallest $\sigma$-algebra containing all open sets of $X$. \par 

A probability measure is now defined to be a map $\mu: \mathcal{B}(X)\to [0,1]$ such that $\mu(\emptyset) = 0$, $\mu(X) = 1$, and for any countable collection of pairwise disjoint Borel sets $A_1,A_2,\ldots \in \mathcal{B}(X)$, we have
\begin{equation}
\mu\left(\bigcup_{i=1}^{\infty}A_i\right) = \sum_{i=1}^{\infty} \mu(A_i).
\end{equation}
We now define a \textbf{state} to be a translation-invariant probability measure on $X$. That is, $\mu$ satisfies the above conditions as well as 
\begin{equation}
\mu(T_{\vec t}A) = \mu(A)\qquad \forall  \vec t \in \mathbb{Z}^d
\end{equation}
where $T_{\vec t}$ is a translation. We write $M_{\mathrm{TI}}(X)$ for the set of all states. 
\par

The ground state energy density is then
\begin{equation}
    \varepsilon^*
    \coloneqq
    \inf_{\mu \in M_{\mathrm{TI}}(X)}
    \int_X \mathcal{E}\, d\mu,
\end{equation}
where $\mathcal{E}:X\to \mathbb{R}$ is the local energy density defined in
Sec.~\ref{sec:spin_sdp}. The integral is understood in the usual
Lebesgue sense with respect to the probability measure $\mu$. It is well defined because $\mathcal{E}$ is a local polynomial observable: it depends on only finitely many spin variables, and is therefore a bounded continuous function on $X$. 
\par 

Next, we need to define the notion of \textit{moments} of the state $\mu$. It will be convenient to introduce a compact notation for spin monomials. Since the spins carry both a lattice-site label and a Cartesian component, we first combine these into a single index set
\begin{equation}
    K \coloneqq \Gamma \times \{x,y,z\}.
\end{equation}
An element of $K$ is therefore a pair $(i,a)$, where $i \in \Gamma$ labels a lattice site (including sublattice index) and $a \in \{x,y,z\}$ labels a spin component. Then a monomial in the spin variables is determined by specifying, for each pair $(i,a)$, how many times the variable $s_i^a$ appears. This is encoded by a function
\begin{equation}
    \alpha : K \to \mathbb{N},
\end{equation}
where $\mathbb{N} = \{0,1,2,\dots\}$. We will only be interested in monomials involving finitely many spin variables, so we require $\alpha$ to be finitely supported, meaning that $\alpha(i,a)\neq 0$ for only finitely many pairs $(i,a)$. We denote the set of all such functions by $\mathbb{N}^{(K)}$. \par

Given such an $\alpha$, we define its total degree by
\begin{equation}
    |\alpha|
    \coloneqq
    \sum_{(i,a)\in K} \alpha(i,a),
\end{equation}
and its spatial support by
\begin{equation}
    \supp(\alpha)
    \coloneqq
    \{i\in \Gamma : \exists a \text{ such that } \alpha(i,a)\neq 0\}.
\end{equation}
Thus $|\alpha|$ counts the total number of spin factors in the monomial, while $\supp(\alpha)$ records which lattice sites appear. \par

The corresponding monomial is
\begin{equation}
    m_\alpha(s)
    \coloneqq
    \prod_{(i,a)\in K} (s_i^a)^{\alpha(i,a)},
\end{equation}
where $s=(s_i)_{i\in\Gamma}\in X$. Because $\alpha$ is finitely supported, this product has only finitely many nontrivial factors and is therefore well defined. We assign the zero multi-index $\alpha=0$ to the constant monomial
\begin{equation}
    m_0 = 1.
\end{equation}
Finally, for each $(i,a)\in K$, we write $e_{i,a}$ for the unit multi-index supported at the single pair $(i,a)$, i.e. the one for which $e_{i,a}(i,a)=1$ and all other entries vanish.

Bravais translations act on the lattice \(\Gamma\) by shifting the Bravais-lattice coordinate while leaving the sublattice index unchanged:
\begin{equation}
   T_{\vec t} \cdot (\R,\mu) = (\R+\vec t,\mu),
\end{equation}
where $\vec t\in \mathbb{Z}^d$. Since a multi-index $\alpha$ records which spin components appear in a monomial, translation acts on multi-indices as well: translating $\alpha$ shifts all of the lattice sites appearing in it by the same vector $\vec t$. Concretely, we define
\begin{equation}
    (T_{\vec t}\alpha)(i,a)\coloneqq \alpha( T_{\vec t}^{-1}\!\cdot i,a).
\end{equation}
Two multi-indices should be regarded as equivalent if they differ only by an overall translation of the lattice, since states assign them the same expectation value. We therefore write $[\alpha]$ for the translation-equivalence class of $\alpha$, and let $\mathcal{C}$ denote the set of all such equivalence classes. We note that the set $\mathcal{C}$ is countable. \par

For any multi-index $\alpha$, the corresponding monomial $m_\alpha$ is a bounded function on $X$, so its expectation value with respect to a state $\mu$ is well defined. Translation invariance of $\mu$ implies that this expectation depends only on the translation-equivalence class $[\alpha]$, not on the particular representative. We may therefore define the associated translation-invariant moment by
\begin{equation}
    y^\mu_{[\alpha]} \coloneqq \int_X m_\alpha\, d\mu.
\end{equation}
These moments are the basic variables of the bootstrap. \par

Because each spin component takes values in the interval $[-1,1]$, every monomial $m_\alpha$ also takes values in $[-1,1]$. It follows immediately that
\begin{equation}
    |y^\mu_{[\alpha]}| \le 1
\end{equation}
for every $\alpha$. Thus every state $\mu$ determines a point in the product space
\begin{equation}
    Y \coloneqq [-1,1]^{\mathcal{C}}.
\end{equation}
Since $\mathcal{C}$ is countable, $Y$ is compact. This compactness will be important later for our convergence proof.

\par

The local energy density $\mathcal{E}$ is, by construction, a polynomial involving only finitely many spin variables. We may therefore write it as a finite linear combination of monomials,
\begin{equation}
    \mathcal{E} = \sum_{j=1}^M a_j\, m_{\alpha_j},
\end{equation}
for some coefficients $a_j$ and multi-indices $\alpha_j$. The energy density is determined by the moments as
\begin{equation}
    \langle \mathcal{E} \rangle_y \coloneqq \sum_{j=1}^M a_j\, y_{[\alpha_j]}.
\end{equation}
This is a finite linear combination of moment coordinates, and hence a (trivially) continuous function on $Y$. \par 

We now define the finite-dimensional SDP relaxation. Fix a finite patch $\cP \subset \Gamma$ and an integer $r \ge 1$. We let
\begin{equation}
    B_{\cP,r}
    \coloneqq
    \{\alpha\in \mathbb{N}^{(K)} : \supp(\alpha)\subset \cP,\ |\alpha|\le r\}
\end{equation}
denote the set of all multi-indices whose associated monomials are supported inside the patch $\cP$ and have degree at most $r$. In other words, $B_{\cP,r}$ collects exactly the monomials that we retain at hierarchy level $r$ on the patch $\cP$. We will always assume that $(\cP,r)$ is chosen large enough that every monomial appearing in $\mathcal{E}$ has some representative supported in $\cP$ and degree at most $2r$, so that the energy density is in the span of the monomials we will consider. \par

The corresponding feasible set $\mathcal{F}_{\cP,r}\subset Y$ consists of all putative moment assignments
\begin{equation}
    y=(y_{[\alpha]})_{[\alpha]\in \mathcal{C}}
\end{equation}
satisfying three conditions. \begin{enumerate}
    \item $y_{[0]}=1$;
    \item the \textit{truncated moment matrix}
    \begin{equation}
        M_{\cP,r}(y)
        \coloneqq
        \bigl(y_{[\alpha+\beta]}\bigr)_{\alpha,\beta\in B_{\cP,r}}
    \end{equation}
    is positive semidefinite;
    \item for every site $i\in \cP$ and every
    $\alpha,\beta\in B_{\cP,r-1}$,
    \begin{equation}
        \sum_{a=x,y,z} y_{[\alpha+\beta+2e_{i,a}]}
        =
        y_{[\alpha+\beta]},
        \label{eq:app_localizing}
    \end{equation}
    which encodes $|\vec S_i|^2=1$ at the level of moments.
\end{enumerate}

Together, these conditions define the \textit{feasible set} $\mathcal{F}_{\cP,r}$. We may now define the SDP lower bound at patch $\cP$ and hierarchy level $r$ by
\begin{equation}
    \varepsilon_{\cP,r}
    \coloneqq
    \min_{y\in \mathcal{F}_{\cP,r}} \langle \mathcal{E}\rangle_y.
\end{equation}
This is the quantity produced by the bootstrap. It is a lower bound on the true ground state energy density because every state $\mu$ produces a feasible moment sequence $y^\mu \in \mathcal{F}_{\cP,r}$. Therefore
\begin{equation}
    \varepsilon_{\cP,r}\le \varepsilon^*.
\end{equation}
Moreover, enlarging the patch or increasing the hierarchy level only adds constraints, and therefore can only increase the minimum. In particular, if $\cP\subseteq \cP'$ and $r\le r'$, then
\begin{equation}
    \mathcal{F}_{\cP',r'}\subseteq \mathcal{F}_{\cP,r},
    \qquad
    \varepsilon_{\cP,r}\le \varepsilon_{\cP',r'}\le \varepsilon^*.
    \label{eq:app_monotone}
\end{equation}
Thus the bootstrap produces a monotone hierarchy of lower bounds. The content of the theorem below is that this hierarchy in fact converges to the true value. \par

Before proceeding with the theorem, we will need two lemmas. The first lemma is a result in the theory of the finite-dimensional moment problem addressing when a linear functional from polynomials to the real numbers arises from the integration of polynomials against some measure~\cite{schmudgen2017moment}, adapted to the case at hand. The second lemma is a technical result on consistency of finite-dimensional marginal distributions which will be necessary for application of the Kolmogorov extension theorem. 
\par 

\begin{lemma}[Moment problem]
\label{lem:finite_patch_measure}
Let $y^*\in \bigcap_{\cP,r}\mathcal{F}_{\cP,r}$.
For every finite $\Lambda\subset \Gamma$, there exists a Borel
probability measure $\mu_\Lambda$ on
\begin{equation}
    K_\Lambda \coloneqq (S^2)^\Lambda
\end{equation}
such that
\begin{equation}
    y^*_{[\alpha]} = \int_{K_\Lambda} m_\alpha\, d\mu_\Lambda
\end{equation}
for every monomial $m_\alpha$ supported in $\Lambda$.
\end{lemma}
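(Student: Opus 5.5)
The plan is to reduce the statement to a finite-dimensional moment problem on a compact algebraic set and then invoke Putinar's Positivstellensatz, as stated in Schmüdgen's monograph~\cite{schmudgen2017moment}.

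Fix a finite $\Lambda \subset \Gamma$. Write $n = 3|\Lambda|$, use the coordinates $s_i^a$ for $(i,a) \in \Lambda \times \{x,y,z\}$, and let $\mathbb{R}[s_\Lambda]$ be the polynomial ring in these variables. Define a linear functional
\begin{equation}
L : \mathbb{R}[s_\Lambda] \to \mathbb{R}, \qquad L(m_\alpha) \coloneqq y^*_{[\alpha]} \quad (\supp(\alpha)\subset\Lambda),
\end{equation}
and extend it linearly. The set $K_\Lambda$ is the compact basic semialgebraic set in $\mathbb{R}^n$ cut out by the equalities $g_i \coloneqq \sum_a (s_i^a)^2 - 1 = 0$ for $i \in \Lambda$. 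It is enough to show that $L$ is represented by a measure supported on $K_\Lambda$.

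The key steps, in order, are as follows.

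\textbf{Step 1 (properties of $L$).} Since $y^*$ lies in $\mathcal{F}_{\cP,r}$ for every $(\cP,r)$, we may take $\cP \supseteq \Lambda$ and $r$ arbitrarily large. Condition 1 gives $L(1) = 1$. Condition 2 gives $L(p^2) \ge 0$ for every $p \in \mathbb{R}[s_\Lambda]$: choose $r \ge \deg p$, and note that $L(p^2) = c^\top M_{\cP,r}(y^*)\,c$, where $c$ is the coefficient vector of $p$. Condition~\eqref{eq:app_localizing} gives $L(m_\alpha m_\beta\, g_i) = 0$. Every monomial factors as $m_\alpha m_\beta$ with both factors in $B_{\Lambda,r-1}$ for large $r$, so by linearity $L(q\, g_i) = 0$ for every polynomial $q$. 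Hence $L$ vanishes on the ideal $I = (g_i)_{i\in\Lambda}$ and is nonnegative on $\Sigma^2$, the cone of sums of squares. It is therefore nonnegative on the quadratic module $Q = \Sigma^2 + I$.

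\textbf{Step 2 (Archimedean property).} Show that $Q$ is Archimedean, i.e.\ that $N - \sum_{i,a}(s_i^a)^2 \in Q$ for some $N$. This holds with $N = |\Lambda|$, since
\begin{equation}
|\Lambda| - \sum_{i,a}(s_i^a)^2 = -\sum_{i\in\Lambda} g_i \in I.
\end{equation}

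\textbf{Step 3 (representation).} Apply Putinar's theorem in its moment form (the dual of the Positivstellensatz). For an Archimedean quadratic module $Q$, a linear functional that is nonnegative on $Q$ is given by integration against a positive Borel measure supported on the associated set. That set is exactly $K_\Lambda$. Since $L(1) = 1$, the measure is a probability measure $\mu_\Lambda$, and $\int m_\alpha\, d\mu_\Lambda = y^*_{[\alpha]}$ for every $\alpha$ supported in $\Lambda$.

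\textbf{Main obstacle.} I expect the delicate point to be Step 3 itself, namely justifying the dual Putinar statement. If I wanted to avoid citing it as a black box, the route would be the following. First, show that $L$ is nonnegative on every polynomial strictly positive on $K_\Lambda$; this uses the Positivstellensatz $p = \sigma + h$ with $\sigma \in \Sigma^2$ and $h \in I$. Extend by continuity to polynomials nonnegative on $K_\Lambda$, via $p + \epsilon$. Then use Stone--Weierstrass density of polynomials in $C(K_\Lambda)$ to obtain a positive bounded functional on $C(K_\Lambda)$. The Riesz representation theorem then yields $\mu_\Lambda$.

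Two bookkeeping points need checking along the way. The first is that the SDP variables are indexed by translation classes $[\alpha]$ while $L$ is defined on representatives $\alpha$. This is harmless, because $L$ is well defined once a representative supported in $\Lambda$ is fixed. The second is that the localizing constraints are imposed only for $\alpha,\beta \in B_{\cP,r-1}$. This suffices, since any monomial times $g_i$ arises this way once $r$ is large.
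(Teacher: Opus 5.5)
Your proposal is correct and follows essentially the same route as the paper: define $L_\Lambda$ on the polynomial ring via $y^*$, get $L_\Lambda(q^2)\ge 0$ from $M_{\cP,r}(y^*)\succeq 0$ and vanishing on the sphere constraints from Eq.~\eqref{eq:app_localizing}, then invoke the Archimedean (Putinar-type) moment theorem in Schm\"udgen. Your explicit identity $|\Lambda|-\sum_{i,a}(s_i^a)^2=-\sum_i g_i$ and your optional Positivstellensatz--Stone--Weierstrass--Riesz unpacking spell out what the paper compresses into ``the sphere constraints are quadratic and the resulting algebraic set is compact.''
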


\begin{proof}
Fix a finite patch $\Lambda\subset \Gamma$. We would like to show that
the restriction of $y^*$ to monomials supported in $\Lambda$ defines a sequence of moments which comes from a genuine
probability measure on $K_\Lambda$. \par

Let $\mathbb{R}[\Lambda]$ denote the polynomial algebra in the (commuting)
variables
\begin{equation}
    \{s_i^a : i\in \Lambda,\ a=x,y,z\}.
\end{equation}
Define a linear functional
\begin{equation}
    L_\Lambda : \mathbb{R}[\Lambda]\to \mathbb{R}
\end{equation}
by declaring, for every multi-index $\alpha$ supported in $\Lambda$,
\begin{equation}
    L_\Lambda(m_\alpha)\coloneqq y^*_{[\alpha]},
\end{equation}
and then extending by linearity. We first claim that 
\begin{equation}
    L_\Lambda(q^2)\ge 0
    \qquad
    \forall\, q\in \mathbb{R}[\Lambda].
\end{equation}
Indeed, choose $\cP\supset \Lambda$ and $r$ large enough that every
monomial appearing in $q$ lies in $B_{\cP,r}$. Writing $ q=\sum_\alpha c_\alpha m_\alpha$, we find
\begin{equation}
    L_\Lambda(q^2)
    =
    \sum_{\alpha,\beta} c_\alpha c_\beta\, y^*_{[\alpha+\beta]}
    =
    \vec c^{\,T} M_{\cP,r}(y^*) \vec c
    \ge 0
\end{equation}
where we used $y^*\in \mathcal{F}_{\cP,r}$ and
$M_{\cP,r}(y^*)\succeq 0$. 

Next, for each site $i\in \Lambda$, define the sphere constraint
polynomial
\begin{equation}
    h_i(s)\coloneqq 1-\sum_{a=x,y,z}(s_i^a)^2.
\end{equation}
We claim that
\begin{equation}
    L_\Lambda(q^2 h_i)=0
    \qquad
    \forall\, q\in \mathbb{R}[\Lambda],\ \forall\, i\in \Lambda.
\end{equation}
Again, choosing $\cP\supset \Lambda$ and $r$ large enough that every
monomial appearing in $q$ lies in $B_{\cP,r-1}$, writing
$q=\sum_\alpha c_\alpha x^\alpha$ and using the unit norm constraints, we find
\begin{align}
    L_\Lambda(q^2 h_i)
    &=
    \sum_{\alpha,\beta} c_\alpha c_\beta
    \left(
        y^*_{[\alpha+\beta]}
        -
        \sum_{a=x,y,z} y^*_{[\alpha+\beta+2e_{i,a}]}
    \right)
    \nonumber\\
    &=0.
\end{align}
We now have that $L_\Lambda$ is a positive linear functional on $\mathbb{R}[\Lambda]$ which annihilates each of the sphere constraints $h_i$ defining the compact algebraic set $K_\Lambda=(S^2)^\Lambda$. In this setting, where the sphere constraints are quadratic and the resulting algebraic set is compact, that  $L_\Lambda$ is represented by a Borel probability measure is a standard result in the theory of the moment problem (e.g. Theorem 12.36 of Ref~\cite{schmudgen2017moment}).

\end{proof}

\begin{lemma}[Consistency of finite-dimensional marginals]
\label{lem:consistency}
If $\Lambda'\subset \Lambda$, then the marginal of $\mu_\Lambda$ on
$K_{\Lambda'}$ coincides with $\mu_{\Lambda'}$.
\end{lemma}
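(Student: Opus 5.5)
The plan is to reduce the claim to uniqueness of a measure on a compact set with prescribed polynomial moments. Let $\pi:K_\Lambda\to K_{\Lambda'}$ be the coordinate projection, which forgets the spins at sites in $\Lambda\setminus\Lambda'$. The marginal is the pushforward $\nu\coloneqq \pi_*\mu_\Lambda$, a Borel probability measure on the compact set $K_{\Lambda'}=(S^2)^{\Lambda'}$. The goal is to show $\nu=\mu_{\Lambda'}$.

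First, I would compare moments. Take any multi-index $\alpha$ supported in $\Lambda'$. The monomial $m_\alpha$ on $K_\Lambda$ depends only on the coordinates in $\Lambda'$, so $m_\alpha = m_\alpha\circ\pi$. By the change-of-variables formula for pushforwards,
\begin{equation}
\int_{K_{\Lambda'}} m_\alpha\, d\nu = \int_{K_\Lambda} m_\alpha\, d\mu_\Lambda = y^*_{[\alpha]} = \int_{K_{\Lambda'}} m_\alpha\, d\mu_{\Lambda'},
\end{equation}
where both equalities with $y^*_{[\alpha]}$ come from Lemma~\ref{lem:finite_patch_measure}, applied once to $\Lambda$ and once to $\Lambda'$. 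Since $\alpha$ is supported in $\Lambda'\subset\Lambda$, it is admissible in both applications. By linearity, $\nu$ and $\mu_{\Lambda'}$ then integrate every polynomial in $\mathbb{R}[\Lambda']$ to the same value.

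Second, I would upgrade agreement on polynomials to equality of measures. This is the main step, though it is standard. Polynomials in the coordinates $s_i^a$ form a unital subalgebra of $C(K_{\Lambda'})$ that separates points. Because $K_{\Lambda'}$ is compact, the Stone--Weierstrass theorem makes this subalgebra uniformly dense in $C(K_{\Lambda'})$. Given $f\in C(K_{\Lambda'})$ and polynomials $p_n\to f$ uniformly, dominated convergence (both measures are finite) gives
\begin{equation}
\int f\,d\nu = \lim_n \int p_n\,d\nu = \lim_n\int p_n\,d\mu_{\Lambda'} = \int f\,d\mu_{\Lambda'}.
\end{equation}
On a compact metric space, finite Borel measures are regular. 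By the uniqueness part of the Riesz representation theorem, two finite Borel measures that agree on all continuous functions are equal. Hence $\nu=\mu_{\Lambda'}$.

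The only subtlety is that Lemma~\ref{lem:finite_patch_measure} asserts existence, not uniqueness, of $\mu_\Lambda$. The argument above shows that uniqueness holds anyway, because moments determine a measure on a compact set. As a result, the family $\{\mu_\Lambda\}$ is well defined, and the consistency just proved holds for whichever representatives were chosen. This is what the subsequent appeal to the Kolmogorov extension theorem requires.
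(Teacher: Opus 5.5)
Your proposal is correct and follows the paper's proof. Both arguments identify the polynomial moments of the pushforward marginal and of $\mu_{\Lambda'}$ with $y^*$ via Lemma~\ref{lem:finite_patch_measure}, then use Stone--Weierstrass density of polynomials in $C(K_{\Lambda'})$ to conclude the two measures coincide. Your additional details fill in steps the paper leaves implicit: the uniform-approximation limit, Riesz uniqueness via regularity on a compact metric space, and the remark that moment determinacy makes the choice of $\mu_\Lambda$ immaterial.
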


\begin{proof}
Let $\pi_{\Lambda\to\Lambda'} : K_\Lambda \to K_{\Lambda'}$ be the coordinate projection, and let $\nu_{\Lambda'}$ denote the marginal of $\mu_\Lambda$ on $K_{\Lambda'}$. We claim that $\nu_{\Lambda'}=\mu_{\Lambda'}$. \par

For every polynomial $p\in \mathbb{R}[\Lambda']$, we have
\begin{align}
    \int_{K_{\Lambda'}} p\, d\nu_{\Lambda'}
    &=
    \int_{K_\Lambda} (p\circ \pi_{\Lambda\to\Lambda'})\, d\mu_\Lambda
    \nonumber\\
    &=
    \int_{K_{\Lambda'}} p\, d\mu_{\Lambda'}.
\end{align}
Both integrals are equal to the same moments prescribed by $y^*$, since $p$ and $p\circ \pi_{\Lambda\to\Lambda'}$ represent the same polynomial observable restricted to the smaller patch $\Lambda'$. Thus $\nu_{\Lambda'}$ and $\mu_{\Lambda'}$ agree on all polynomial functions. Since polynomials are dense in the continuous functions on the compact space $K_{\Lambda'}$ by the Stone--Weierstrass theorem, the two measures agree on all continuous functions, and hence are equal as Borel probability measures. Therefore $\nu_{\Lambda'}=\mu_{\Lambda'}$.
\end{proof}

We now state and prove the main theorem. 

\begin{theorem}[Asymptotic completeness]
\label{thm:completeness}
Let $(\cP_n,r_n)_{n\ge 1}$ be a sequence satisfying
\begin{gather}
    \cP_1\subset \cP_2\subset \cdots,
    \qquad
    \bigcup_{n\ge 1}\cP_n=\Gamma,
    \nonumber\\
    r_1\le r_2\le \cdots,
    \qquad
    r_n\to \infty.
\end{gather}
Then
\begin{equation}
    \lim_{n\to\infty} \varepsilon_{\cP_n,r_n} = \varepsilon^*,
\end{equation}
where $\varepsilon^*$ is the true ground state energy density. 
\end{theorem}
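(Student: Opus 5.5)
By the monotonicity \eqref{eq:app_monotone}, the sequence $\varepsilon_{\cP_n,r_n}$ is nondecreasing and bounded above by $\varepsilon^*$. Hence it converges to some limit $\bar\varepsilon\le\varepsilon^*$, and it suffices to prove $\bar\varepsilon\ge\varepsilon^*$. The plan is to build a genuine state whose energy density equals $\bar\varepsilon$.

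\emph{Step 1: a limit point.} For each $n$, pick a minimizer $y^{(n)}\in\mathcal{F}_{\cP_n,r_n}$. A minimizer exists because each $\mathcal{F}_{\cP,r}$ is closed in $Y$: its defining conditions involve only finitely many coordinates and are closed conditions (an affine equation, PSD-ness of a finite matrix, finitely many linear equalities). So $\mathcal{F}_{\cP,r}$ is compact, and the objective is continuous. Since $\mathcal{C}$ is countable, $Y$ is compact and metrizable, so a subsequence converges to some $y^*\in Y$. For any fixed $(\cP,r)$, eventually $\cP\subset\cP_n$ and $r\le r_n$, because the patches exhaust $\Gamma$ and $r_n\to\infty$. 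Thus $y^{(n)}\in\mathcal{F}_{\cP,r}$ for all large $n$, and closedness gives $y^*\in\mathcal{F}_{\cP,r}$. So $y^*\in\bigcap_{\cP,r}\mathcal{F}_{\cP,r}$. Continuity of $\langle\mathcal{E}\rangle_y$ gives $\langle\mathcal{E}\rangle_{y^*}=\bar\varepsilon$.

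\emph{Step 2: a measure on $X$.} Apply Lemma~\ref{lem:finite_patch_measure} to $y^*$ to get measures $\mu_\Lambda$ on $K_\Lambda$ for every finite $\Lambda$. Lemma~\ref{lem:consistency} shows that this family is projectively consistent. Each $K_\Lambda$ is a compact metric (hence Polish) space, so the Kolmogorov extension theorem yields a Borel probability measure $\mu$ on $X=(S^2)^\Gamma$ whose marginal on each $K_\Lambda$ is $\mu_\Lambda$. Then $\int_X m_\alpha\,d\mu=y^*_{[\alpha]}$ for every $\alpha$. The Borel $\sigma$-algebra of the countable product coincides with the product $\sigma$-algebra, so $\mu$ is Borel.

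\emph{Step 3: translation invariance and conclusion.} This is the step needing the most care. Translation invariance holds at the level of moments, since $\int m_{T_{\vec t}\alpha}\,d\mu=y^*_{[T_{\vec t}\alpha]}=y^*_{[\alpha]}=\int m_\alpha\,d\mu$. We upgrade this to invariance of the measure. The pushforward $(T_{\vec t})_*\mu$ and $\mu$ agree on all monomials, hence on all cylinder polynomials. These are dense in $C(X)$ by Stone--Weierstrass: they form a subalgebra of $C(X)$ that contains constants and separates points, because points of the product differ at some site. Two Borel probability measures on a compact metric space that agree on $C(X)$ are equal by the Riesz representation theorem, so $\mu$ is a state. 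Finally,
\[
\varepsilon^*\le\int_X\mathcal{E}\,d\mu=\langle\mathcal{E}\rangle_{y^*}=\bar\varepsilon,
\]
using that $\mathcal{E}$ is a finite combination of monomials. Together with $\bar\varepsilon\le\varepsilon^*$ this gives equality. Since the full sequence is monotone, the whole sequence converges, not just the subsequence.
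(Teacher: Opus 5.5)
Your proposal is correct and follows the paper's proof essentially step for step. It extracts a compactness limit point $y^*\in\bigcap_{\cP,r}\mathcal{F}_{\cP,r}$, builds $\mu$ from Lemmas~\ref{lem:finite_patch_measure} and~\ref{lem:consistency} plus Kolmogorov extension, gets translation invariance from Stone--Weierstrass, and closes with $\varepsilon^*\le\langle\mathcal{E}\rangle_{y^*}=\bar\varepsilon\le\varepsilon^*$. Your variations are minor refinements: you apply Stone--Weierstrass to cylinder polynomials on all of $X$ with Riesz instead of comparing patch marginals, and you state explicitly that $Y$ is metrizable and that the Borel and product $\sigma$-algebras coincide. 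You should also note, as the paper does, that each $\mathcal{F}_{\cP_n,r_n}$ is nonempty (every state's moment sequence is feasible) before choosing minimizers.
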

\begin{proof}
The proof has three steps. First, we extract a limit point of the SDP optimizers using compactness. Second, we show that such a limit point defines a genuine state on the infinite configuration space. Third, we show that the energy density of this state is exactly $\varepsilon^*$. \par

The key idea is the following. A feasible SDP point is a collection of numbers that is intended to represent the moments of some state, but a priori it is only required to satisfy finitely many consistency conditions: positivity of a truncated moment matrix and finitely many unit-norm identities. Such a point need not come from a genuine probability measure. However, if a limit point satisfies \emph{every} finite consistency condition, then we will show that on each finite patch it does arise from a genuine probability measure. The marginals will be proven to be mutually consistent and hence can be glued together into a measure on the infinite lattice by the Kolmogorov extension theorem. Finally, it will follow easily that the measure is translation-invariant and that the energy density of this state is the true ground state energy density. \par

\textbf{Step 1: Extracting a limit point.} By Eq.~\eqref{eq:app_monotone}, the sequence $\varepsilon_{\cP_n,r_n}$ is monotone increasing and bounded above by $\varepsilon^*$. It therefore converges to some limit
\begin{equation}
    \varepsilon_\infty \le \varepsilon^*.
\end{equation}
For each $n$, the feasible set $\mathcal{F}_{\cP_n,r_n}\subset Y$ is nonempty because every translation-invariant state $\mu$ produces a feasible moment sequence $y^\mu \in \mathcal{F}_{\cP_n,r_n}$. Moreover, $\mathcal{F}_{\cP_n,r_n}$ is closed because each of its defining conditions defines a closed set, and it is hence the intersection of finitely many closed sets. Since the objective function $\langle \mathcal{E}\rangle_y$ is continuous on $Y$, it follows that the minimum defining $\varepsilon_{\cP_n,r_n}$ is attained. We may therefore choose, for each $n$, an optimizer
\begin{equation}
    y^{(n)} \in \mathcal{F}_{\cP_n,r_n}
\end{equation}
such that $\langle \mathcal{E}\rangle_{y^{(n)}} = \varepsilon_{\cP_n,r_n}$. Since $Y$ is compact, the sequence $\{y^{(n)}\}$ has a convergent subsequence. Passing to such a subsequence, which we denote by $\{y^{(n_k)}\}$, we obtain a point $y^* \in Y$ such that
\begin{equation}
    y^{(n_k)} \to y^*
\end{equation}
in $Y$. Because convergence in $Y$ is coordinatewise, this simply means that every moment coordinate converges:
\begin{equation}
    y^{(n_k)}_{[\alpha]} \to y^*_{[\alpha]}
\end{equation}
for each $[\alpha]\in \mathcal{C}$.
We now claim that
\begin{equation}\label{eq:ystarinintersection}
    y^* \in \bigcap_{\cP,r}\mathcal{F}_{\cP,r}.
\end{equation}
In other words, the limit point $y^*$ satisfies the bootstrap constraints for \emph{every} finite patch $\cP$ and hierarchy level $r$. \par

To see this, fix a particular finite patch $\cP$ and hierarchy level $r$. Since the sequence $(\cP_n,r_n)$ is assumed to grow to cover the whole lattice and to reach arbitrarily high hierarchy level, there exists some index beyond which 
\begin{equation}
    \cP \subseteq \cP_{n_k},
    \qquad
    r \le r_{n_k}.
\end{equation}
For those $k$, the optimizer $y^{(n_k)}$ must satisfy all of the constraints defining $\mathcal{F}_{\cP,r}$, and hence $y^{(n_k)} \in \mathcal{F}_{\cP,r}$ for all sufficiently large $k$. Now, passing to the limit $k\to\infty$ and using the closedness of $\mathcal{F}_{\cP,r}$, we have
\begin{equation}
    y^* \in \mathcal{F}_{\cP,r}.
\end{equation}
Since the choice of $\cP$ and $r$ was arbitrary, Eq.~\eqref{eq:ystarinintersection} follows. 

\textbf{Step 2: Reconstructing the state.}  
The conclusion of Step 1 was that the limit point $y^*$ satisfies the bootstrap constraints for every finite patch and every hierarchy level. We now show that this implies that $y^*$ is the moment sequence of a genuine state on $X$. \par

For each finite patch $\Lambda \subset \Gamma$, we show in Lemma~\ref{lem:finite_patch_measure} that one may construct a probability measure $ \mu_\Lambda$ on 
\begin{equation}
K_\Lambda \coloneqq (S^2)^\Lambda
\end{equation}
whose moments agree with the restriction of $y^*$ to $\Lambda$. Moreover, in Lemma~\ref{lem:consistency} we show that these measures are compatible under the operation of taking marginals: if $\Lambda' \subset \Lambda$, then the marginal of $\mu_\Lambda$ on $\Lambda'$ is exactly $\mu_{\Lambda'}$. This implies that the collection $\{\mu_\Lambda\}$ forms a \textit{projectively consistent} family of finite-dimensional distributions. \par

We may therefore apply the Kolmogorov extension theorem to our setting. This theorem yields a unique Borel probability measure $\mu$ on the full space $X$ whose marginal on each finite patch $\Lambda$ is $\mu_\Lambda$. We refer to, for example, \S36 of Ref.~\cite{billingsley1995probability} for a reference on this theorem. \par
It remains to show that $\mu$ is translation-invariant. Let $\vec t \in \mathbb{Z}^d$ and let $T_{\vec t} : X \to X$ denote translation by $\vec t$. Fix a finite patch $\Lambda \subset \Gamma$, and let $\vec t+\Lambda$ denote its translate. There is a natural relabeling map
\begin{equation}
    \theta_{\vec t,\Lambda} : K_{\vec t+\Lambda} \to K_\Lambda
\end{equation}
obtained by identifying the translated patch $\vec t+\Lambda$ with $\Lambda$. Now, we may compare the restriction of the translated state to the patch $\Lambda$ with the restriction of the original state to the translated patch $\vec t+\Lambda$. These are related by the relabeling map $\theta_{\vec t,\Lambda}$. Thus, for every polynomial $p$ on $K_\Lambda$, we have
\begin{align}
    \int_{K_\Lambda} p\, d\mu^{(\vec t)}_\Lambda
    &= \int_{K_{\vec t+\Lambda}} (p\circ \theta_{\vec t,\Lambda})\, d\mu_{\vec t+\Lambda}
    \nonumber\\
    &= \int_{K_\Lambda} p\, d\mu_\Lambda,
\end{align}
where $\mu_\Lambda$ denotes the marginal of $\mu$ on the patch $\Lambda$, and $\mu^{(\vec t)}_\Lambda$ denotes the marginal on the same patch $\Lambda$ of the translated state. The last step uses that $y^*$ is indexed by translation classes, so translating a monomial does not change its assigned moment. \par

Thus $\mu^{(\vec t)}_\Lambda$ and $\mu_\Lambda$ agree on all polynomial functions. By the Stone--Weierstrass theorem, the two measures agree on all continuous functions. Since this holds for every finite patch $\Lambda$, the translated state and the original state have the same finite-dimensional marginals everywhere, and therefore coincide. Hence $\mu$ is translation-invariant.

Finally, the moments of $\mu$ are exactly the entries of $y^*$. Indeed, let $[\alpha]\in \mathcal{C}$, and choose a representative $\alpha$ whose support is contained in some finite patch $\Lambda$. Then
\begin{equation}
    \int_X m_\alpha\, d\mu
    =
    \int_{K_\Lambda} m_\alpha \, d\mu_\Lambda
    =
    y^*_{[\alpha]}.
\end{equation}
Thus $y^*$ is the moment sequence of a genuine translation-invariant state on the infinite lattice.

\textbf{Step 3: Optimality of the limit point.}  
We now show that the energy density of $\mu$ is in fact the true ground state energy density $\varepsilon^*$.
\par 
Since the local energy density $\mathcal{E}$ is a finite linear combination of monomials, the corresponding energy functional $\langle \mathcal{E}\rangle_y$ is a continuous function of the moment coordinates. It follows that
\begin{equation}
    \langle \mathcal{E}\rangle_{y^{(n_k)}} \to \langle \mathcal{E}\rangle_{y^*}.
\end{equation}
On the other hand, by definition of the optimizers $y^{(n_k)}$,
\begin{equation}
    \langle \mathcal{E}\rangle_{y^{(n_k)}} = \varepsilon_{\cP_{n_k},r_{n_k}} \to \varepsilon_\infty.
\end{equation}
Therefore $ \langle \mathcal{E}\rangle_{y^*} = \varepsilon_\infty$. In Step 2 we constructed a state $\mu$ such that
\begin{equation}
    \langle \mathcal{E}\rangle_{y^*}
    =
    \int_X \mathcal{E}\, d\mu.
\end{equation}
By definition of $\varepsilon^*$ as the infimum of $\int_X \mathcal{E}\, d\mu$ over all $\mu\in M_{\mathrm{TI}}(X)$, this implies
\begin{equation}
    \varepsilon_\infty
    =
    \int_X \mathcal{E}\, d\mu
    \ge \varepsilon^*.
\end{equation}
From Step 1 we recall $\varepsilon_\infty \le \varepsilon^*$, and hence we may conclude $ \varepsilon_\infty =  \lim_{n\to\infty} \varepsilon_{\cP_n,r_n} = \varepsilon^*$. This completes the proof.
\end{proof}

\end{document}